\title{\boldmath Normalized Fuchsian form on Riemann sphere and differential equations for multiloop integrals.}
\author{Roman N. Lee }
\author{and Andrei A. Pomeransky}
\affiliation{The Budker Institute of Nuclear Physics, 630090, Novosibirsk, Russia}
\emailAdd{r.n.lee@inp.nsk.su}
\emailAdd{a.a.pomeransky@inp.nsk.su}
\newtheorem*{prop}{Proposition}
\abstract{
	We consider the question of reducibility of the differential system to normalized Fuchsian form on the Riemann sphere. The differential equations for the multiloop integrals in $\epsilon$-form constitute a particular example of the  normalized Fuchsian form.  We formulate the algorithmic criterion of reducibility. We also consider the question of the proper choice of variable in the differential system suitable for its reduction to $\epsilon$-form.
}
\begin{document} 
\maketitle
\flushbottom
\section{Introduction}\label{sec:intro}

The reduction of differential systems to various normal forms is a classical problem of differential equations theory. It is very important for applications. In particular, recent progress with multiloop calculations is connected with the possibility to reduce the system for the master integrals to $\epsilon$-form \cite{Henn2013}. In Ref. \cite{Lee2014} an efficient algorithm of finding this $\epsilon$-form has been formulated. This algorithm is based on the transformations which are singular in exactly two points (such transformations are called balances in Ref. \cite{Lee2014}). It is expected to happen, and indeed happens in real-life examples, that this $\epsilon$-form can not be found. In particular, even the global Fuchsian form for the differential equations with regular singularities may be unreachable due to the result of Bolibrukh \cite{Bolibrukh1989}. The algorithm of Ref. \cite{Lee2014} lacks an essential ingredient: the termination criterion. In practice, the irreducible cases are easily detected by trying several different ways to make the reduction and always failing to construct a required balance at some step. Nevertheless, the decisive criterion of the (im)possibility to find the required $\epsilon$-form is of essential interest. The main goal of the present paper is to formulate this criterion.

In fact, the formulated criterion concerns a more general class of the normal forms, the global normalized Fuchsian forms, of which the $\epsilon$-form is a specific example. We will define these forms later.

\section{Preliminary considerations}

We consider the system of the form
\begin{equation}
\partial_x J = M J\,.\label{eq:DS1}
\end{equation}
Here $J$ is a column of the unknown functions and $M$ is a matrix with entries being the rational functions of $x$. We are interested in the reduction of the system by the rational transformations of functions $J= T\tilde{J}$, where $T$ is a matrix with entries being the rational functions of $x$. The new functions $\tilde{J}(x)$ obey the differential system
\begin{equation}
\partial_x \tilde{J} = M_T \tilde{J}\,,
\end{equation}
where 
\begin{equation}
	M_T = T^{-1}(MT-\partial_x T)\,.
\end{equation}

The system \eqref{eq:DS1} is said to be in Fuchsian form at $x=x_0\neq\infty$ iff the matrix $M$ has a first-order pole as a function of $x$, i.e., iff 
\begin{equation}
	M(x)=\frac{A}{x-x_0}+O\left((x-x_0)^0\right)\,.
\end{equation}
We will call $A$ in the above formula \emph{the matrix residue of $M$ at $x_0$}. By a slight abuse of notations we will also say that the matrix $M$ is Fuchsian at $x=x_0$. 

If the system is in Fuchsian form at $x=x_0$, the point $x_0$ is necessarily regular singular, i.e. the general solution is bounded by a power of $|x-x_0|$ when $x\to x_0$ (from within some angular sector). Vice versa, at any regular singular point $x=x_0$ the system can be \emph{reduced} to Fuchsian form by known algorithm \cite{barkatou2009moser}. In fact, if all \emph{finite} points are regular, the algorithm of Ref. \cite{barkatou2009moser} allows to reduce the system to Fuchsian form at all these points. In what follows we assume that the differential system \eqref{eq:DS1} has only regular singular points (possibly, including $x=\infty$).

Note that the spectrum of the matrix residue at a given point is not invariant under the rational transformations preserving the Fuchsian form at  this point. However, if we replace each eigenvalue $\lambda_i$ by the equivalence class $\Lambda_i=\lambda_i+\mathbb{Z}$, the set $\{\Lambda_1,\ldots \Lambda_n\}$ will be invariant, including the multiplicities of each class. By the \emph{normalization} we will call the rule which allows to pick exactly one specific representative out of each distinct equivalence class $\Lambda_i$. 

Let us give two examples of the normalizations. Let the entries of the matrix $M$ belong to the field of rational functions of $x$ over $\mathbb{C}$. Then we might choose the normalization by the following condition: the matrix residue is said to be normalized iff all its eigenvalues, $\lambda$, satisfy $0\leqslant \Re \lambda <1$. More generally, we might choose the normalization by the condition $a\leqslant \Re \lambda <a+1$, where $a$ is some real number.

Second example of normalization concerns the differential systems appearing in the multiloop calculations. The matrix $M$ depends now on the parameter $\epsilon$. Reduction to $\epsilon$-form may be possible only if all eigenvalues of the matrix residue at a given point have the form $k+\alpha \epsilon$, where $k\in \mathbb{Z}$, and $\alpha \in \mathbb{C}$. In this case we may fix the normalization by the requirement  that eigenvalues are proportional to $\epsilon$.

The notion of normalization is very broad, and in the following we will avoid the specification of the normalization rules, when possible. The only properties of the normalization which will be important for us are the following: if the set ${\lambda_1,\ldots \lambda_n}$ is normalized, then we might claim that:
\begin{enumerate}
	\item There are no resonances, i.e., from $\lambda_i-\lambda_j\in\mathbb{Z}$ it follows that $\lambda_i=\lambda_j$.
	\item The set ${\lambda_1+k_1,\ldots \lambda_n+k_n}$ for $k_1,\ldots k_n\in\mathbb{Z}$ is normalized only if $k_1=\ldots k_n=0$.
\end{enumerate}
%

We will say that the system \eqref{eq:DS1} in Fuchsian form at $x=x_0\neq \infty$ is normalized at $x=x_0$  iff the matrix residue $A$ in this point is normalized. Again, by a slight abuse of notations we will say that the matrix $M$ is normalized at $x=x_0$. Note that the normalization rules may, in principle, vary from point to point.

The property of being in (normalized) Fuchsian form with the matrix residue $A$ at the special point $x=\infty$ is defined as the corresponding property at $y=0$ of the system obtained by going over to new variable $y=1/x$.

Let us now prove the following
\begin{prop}
	If both $M$ and $M_T$ are normalized by the same rule at $x=x_0$, then $T$ is regular and invertible at $x=x_0$.
\end{prop}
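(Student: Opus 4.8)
The plan is to work locally, setting $t=x-x_0$, and to extract everything from the lowest-order Laurent coefficient of $T$. Rearranging the transformation law $M_T=T^{-1}(MT-\partial_x T)$ into the equivalent form $\partial_x T = MT - TM_T$, I would expand all three matrices in powers of $t$: by hypothesis $M=A/t+O(1)$ and $M_T=A_T/t+O(1)$ with $A$ and $A_T$ normalized by the \emph{same} rule, while $T$, being rational, has a Laurent expansion $T(t)=\sum_{k\geqslant k_0}T_k\,t^{k}$ with $T_{k_0}\neq 0$ and $k_0\in\mathbb{Z}$ the valuation of $T$ (the minimal order over its entries). The aim of the first part is to show $k_0\geqslant 0$, i.e.\ that $T$ has no pole at $x_0$.

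The key step is to read off the coefficient of $t^{k_0-1}$ on both sides of $\partial_x T = MT - TM_T$. On the left this coefficient is $k_0 T_{k_0}$; on the right, the only way to produce the power $t^{k_0-1}$ is to combine the polar parts $A/t$, $A_T/t$ with the leading term $T_{k_0}t^{k_0}$, so that all regular parts of $M$, $M_T$ and all subleading terms of $T$ drop out, leaving $A T_{k_0}-T_{k_0}A_T$. Hence $T_{k_0}$ obeys the Sylvester-type intertwining relation
\begin{equation}
A\,T_{k_0}-T_{k_0}\,(A_T+k_0 I)=0\,.
\end{equation}
Since $T_{k_0}\neq 0$, the standard criterion for the Sylvester operator forces $A$ and $A_T+k_0 I$ to share an eigenvalue, i.e.\ there are $\lambda\in\mathrm{spec}\,A$ and $\mu\in\mathrm{spec}\,A_T$ with $\lambda=\mu+k_0$. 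But $\lambda$ and $\mu$ are both normalized representatives under the same rule, and $\lambda-\mu=k_0\in\mathbb{Z}$ places them in a single equivalence class; since the normalization selects a unique representative from that class, we conclude $\lambda=\mu$, whence $k_0=0$. This proves that $T$ is regular at $x_0$.

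To finish I would exploit the symmetry of the setup. As $\det T\not\equiv 0$, the inverse $T^{-1}$ is again rational, and a direct check shows that $T^{-1}$ implements the reverse gauge transformation, carrying $M_T$ back to $M$; both of these are Fuchsian and normalized at $x_0$ by the same rule. The identical argument applied to $T^{-1}$, with the roles of $A$ and $A_T$ interchanged, shows that $T^{-1}$ is likewise regular at $x_0$. Consequently $\det T$ and $\det T^{-1}=1/\det T$ are both holomorphic at $x_0$, which forces $\det T(x_0)\neq 0$; thus $T(x_0)$ is invertible, as claimed.

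The main obstacle I anticipate lies entirely in the leading-order bookkeeping: one must carefully verify that the coefficient of $t^{k_0-1}$ on the right-hand side receives contributions \emph{only} from the matrix residues $A$ and $A_T$, which is precisely what isolates the clean intertwining relation above and makes the normalization hypothesis bite. The remaining ingredients — the nonzero-kernel criterion for the Sylvester operator and the uniqueness of a normalized representative per equivalence class — are standard and follow directly from the properties of the normalization stated earlier.
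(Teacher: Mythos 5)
Your proposal is correct and takes essentially the same route as the paper: both extract from $\partial_x T = MT - TM_T$ the leading Laurent-coefficient relation (your $A\,T_{k_0}=T_{k_0}(A_T+k_0 I)$ is the paper's $A_0 T_0 = T_0(B_0-n)$ with $k_0=-n$) and then use the normalization properties to rule out a nonzero integer shift between eigenvalues of $A$ and $A_T$, handling $T^{-1}$ by the symmetric argument. The only cosmetic difference is that where you cite the standard Sylvester-operator criterion (nontrivial kernel of $X\mapsto AX-XB$ forces a common eigenvalue), the paper proves exactly this step by hand, via a generalized-eigenvector (Jordan-chain) construction yielding $T_0 u\neq 0$ with $A_0 T_0 u=(\lambda-n)T_0 u$.
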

\begin{proof}
	Let us prove by contradiction. Let $M$ and $M_T$ be normalized Fuchsian at $x=x_0$, but $T$ and/or $T^{-1}$ be singular at $x=x_0$.
	In what follows we consider the case when $x_0\neq\infty$ and  $T$ is singular at $x=x_0$. Other cases can be considered in a similar way. Let $M=\frac{A_{0}}{x-x_0}+O((x-x_0)^0)$, $M_{T}=\frac{B_{0}}{x-x_0}+O((x-x_0)^0)$ are  normalized at $x=x_0$, but $T$ is singular, $T\left(x\right)=\frac{T_{0}}{(x-x_0)^{n}}+\ldots,\quad\left(n>0\right)$.
	Leaving only the most singular in $x-x_0$ terms in the identity $MT=TM_T+\partial T$,
	we obtain
	\[
	A_{0}T_{0}=T_{0}(B_{0}-n)
	\]
	
	The matrix $B_{0}$ is, in general, reducible to Jordan form. It
	is easy to see that there exists a generalized eigenvector $u$ of $B_0$, such that
	\begin{equation}\label{eq:cond1}
	B_{0}u=\lambda u+u_1,
	\end{equation}
	and
	\begin{align}
	\label{eq:cond2}
 	 T_{0}u\neq0\,,\\
 	 \label{eq:cond3}
	T_{0}u_1=0\,.
	\end{align}
	Here $u_1$ is either zero or another generalized eigenvector, $\lambda$ is some eigenvalue of $B_0$ which, by the assumption, satisfies the normalization criterion.
	Indeed, the generalized eigenvectors of $B_0$ form a basis, therefore, there must be one for which \eqref{eq:cond1} and \eqref{eq:cond2} hold. If condition \eqref{eq:cond3} fails, we may try $u_1$ as a new candidate for $u$. Clearly, the process terminates and all three conditions \eqref{eq:cond1}-\eqref{eq:cond3} hold.
	Then we have
	\begin{equation}
	A_{0}T_{0}u=T_{0}(B_{0}-n)u=(\lambda-n)T_{0}u
	\end{equation}
	Therefore $A_{0}$ has the eigenvector $T_{0}u$ with the eigenvalue $\lambda-n$. But $\lambda$  and $\lambda-n$ can not both belong to norlmalized set since they differ by integer number $n\neq 0$. Therefore, $A_{0}$ is not normalized, which is the contradiction.
\end{proof}

Let us now make the following observation: \emph{Any differential system with regular singularities on the Riemann sphere can be reduced  to normalized Fuchsian form in all  singular points but maybe one  (called below the \emph{exceptional} point) by means of known algorithms}.
Indeed, the Fuchsian form in all points but one point can be achieved by means of Barkatou\&Pfl\"ugel algorithm \cite{barkatou2009moser}, or by the analogous algorithm presented in first subsection of Section 3 of Ref. \cite{Lee2014}. Then, the normalization of the matrix residues can be performed by balances  shifting the eigenvalues by $\pm 1$ as explained in second subsection of Section 3 of Ref. \cite{Lee2014}. Both for achieving Fuchsian form and normalization at a given point, we use the balances between this point and the exceptional one, without taking into account the behavior at the exceptional point. Here we intentionally do not refer to the global reduction described in Section 4 of Ref. \cite{Lee2014} to simplify the presentation. In actual calculations to facilitate the computation, one is free, and, likely, is forced to adjust the balances in the second point following the prescriptions of Section 4 of Ref. \cite{Lee2014}.

\section{Global normalized Fuchsian form}

According to the observation of the previous Section, given a system \eqref{eq:DS1} with regular singularities, we may perform the reduction to the normalized Fuchsian form in all points but one exceptional. Therefore, without loss of generality, we may assume that the system \eqref{eq:DS1} is in normalized Fuchsian form everywhere except $x=\infty$. Thanks to Proposition 1 of the previous Section, the transformation $T$ reducing $M$ to global normalized Fuchsian form, if it exists, is regular and invertible everywhere except $x=\infty$. I.e., both $T$ and $T^{-1}$ are matrices with entries being polynomials of $x$, which also implies that $\det {T}$ is independent of $x$. In order to find this transformation, we use the following approach. Let us first use the observation of the previous Section and the possibility to choose another exceptional point. Without loss of generality we may choose it as $x=0$. Therefore, we find the transformation $U$ such that 
\begin{equation}\label{eq:M_U}
	\tilde{M}=M_U=U^{-1}(MU -\partial_x U)
\end{equation} is normalized Fuchsian everywhere except $x=0$. Now we note that, if $T$ exists, the transformation $S=U^{-1}T$ is necessarily polynomial in $x^{-1}$ with constant determinant as it should be regular and invertible everywhere except $x=0$. Therefore, if $T$ exists, we may represent 
\begin{equation}\label{eq:decomposition}
U =T(x)S^{-1}\left(x^{-1}\right)
\end{equation}
where both $T$ and $S$ are polynomial matrices of their arguments ($x$ and $x^{-1}$, respectively) with the determinants independent of $x$. This decomposition explicitly demonstrates that $\det U$ should be independent of $x$ and that $U$ and $U^{-1}$ are necessarily regular in all points but two: $x=0$ and $x=\infty$, so, their entries are  Laurent polynomials in $x$. Let us note that if the decomposition exists, it is unique, up to the transformations $T(x)\to T(x) L\,,\ S(x^{-1})\to S(x^{-1}) L$, where $L$ is constant matrix. Indeed, suppose, there is another decomposition $U=T_1(x)S^{-1}_1(1/x)$. Then we have $T^{-1}T_1=S^{-1}S_1$. The left-hand side of this identity is polynomial in $x$, while the right-hand side is polynomial in $x^{-1}$, therefore they both are equal to some constant matrix $L$.

Note that it is very instructive to view the matrix $U$ in \eqref{eq:decomposition} as the transition function for the holomorphic vector bundle on the Riemann sphere covered by two charts obtained by stereographic projections from North and from South pole. This perspective is in no way accidental, in fact the problem of finding the global normalized Fuchsian form can be reformulated on the very early stage in the language of holomorphic vector bundles on the Riemann sphere (cf. \cite{Deligne}). According to the Birkhoff-Grothendieck theorem (see, e.g., Ref. \cite{Hitchin}), any holomorphic vector bundle on the Riemann sphere is equivalent to the direct sum of line bundles, each characterized by an integer number, its degree. This corresponds to the existence of the decomposition 
\begin{equation}\label{eq:decomposition1}
U =T(x)x^DS^{-1}(1/x)\,,
\end{equation}
here $T$ and $S$ are polynomial matrices of their arguments ($x$ and $x^{-1}$, respectively) with the determinants independent of $x$, and $D=\mathrm{diag}(d_1,d_2,\ldots)$, where $d_1,d_2,\ldots$ are the degrees of the line bundles entering the direct sum in  Birkhoff-Grothendieck theorem. So, the equivalence classes of the holomorphic vector bundles on the Riemann sphere are labeled by the spectrum of $D$. The case $D=0$, when Eq. \eqref{eq:decomposition1}, reduces to  Eq. \eqref{eq:decomposition} corresponds to the triviality of the vector bundle. The algorithm of finding the decomposition \eqref{eq:decomposition1} can be constructed along the lines of Ref. \cite{HAZEWINKEL1982207}. Let us present, for the sake of the completeness, the algorithm solving the following simpler problem\footnote{In fact, this is a variant of the Riemann-Hilbert problem.}: \emph{Given a rational $n\times n$ matrix $U(x)$ find the decomposition \eqref{eq:decomposition} or prove that it does not exist}.

\paragraph{Finding decomposition \eqref{eq:decomposition}} 
First we check that $U$ and $U^{-1}$ have entries which are Laurent polynomials in $x$. Note that this necessarily should be the case when $U$ comes from the reduction of the differential equations. Then we check that $\det U$ is independent of $x$. If it is not so, the decomposition does not exist. Then we try to find polynomial vectors $v(x)$, such that $U^{-1}(x)v(x)$ contains only negative powers of $x$. Let us denote by $\mathrm{PP}_\infty f(x)$ the principal part of the expansion of $f(x)$ at $x=\infty$, i.e., the contribution of the positive powers. Then we have the equation
\begin{equation}\label{eq:PPeq}
	\mathrm{PP}_\infty U^{-1}(x)v(x) =0
\end{equation}
If the decomposition exists, it is clear that \eqref{eq:PPeq} has $n$ linearly independent solutions. Indeed, the columns of the matrix $T$ constitute the required set. Now we show that the converse is also true. Let us define $T$ as the matrix with columns being the found $n$ vectors $v_1(x)\ldots v_n(x)$ Then we have $\mathrm{PP}_\infty U^{-1}(x)T(x) =0$, i.e.
\begin{equation}
 U^{-1}(x)T(x) =Q(x^{-1})
\end{equation}
where $Q(x^{-1})$ is polynomial in $x^{-1}$. Taking the determinant of both sides, we have  
\begin{equation}
(\det U)^{-1}\det T(x) =\det Q(x^{-1})\,.
\end{equation}
The left-hand side of this identity is polynomial in $x$ (since $\det U$ is independent of $x$), while the right-hand side is polynomial in $x^{-1}$, therefore they both are equal to some constant $A$. Since $T$ is constructed of $n$ linearly independent vectors, $A\neq0$. Therefore, we have proved that both $\det T(x)$ and $\det Q(x^{-1})$ are independent of $x$. Taking $S=Q$, we have the required decomposition. Note that the search of the vectors $v$ is reduced to a finite system of linear equation. This is due to the restriction on the  maximal power of the polynomials in $v(x)$: from $T=U S$ we see that the maximal degree of the polynomials in $v(x)$ is restricted by that in $U$ (recall that $S$ is polynomial in $x^{-1}$). 

We have implemented the described algorithm of finding the decomposition \eqref{eq:decomposition} in the Wolfram Mathematica notebook (see ancillary file for this submission).

\section{Application to $\epsilon$-form}

In application to the algorithm of Ref. \cite{Lee2014}, our present result allows one to perform the second step, normalization, or to prove that the global normalized Fuchsian form does not exist. The latter, in particular, means that the $\epsilon$-form does not exist. Note, that in order to achieve $\epsilon$-form, one has to perform third step, which is the factorization of $\epsilon$ using the transformation independent of $x$. Taking into account the proposition of Section 2, we can see that, indeed, the transformation to $\epsilon$-form, if it exists, should be independent of $x$. Therefore we can claim that if the algorithm of Ref. \cite{Lee2014} fails at the stage of factorization, $\epsilon$-form can not be reached by rational transformations.

\subsection*{Change of variable}

So far, our considerations related the reducibility to $\epsilon$-form by the transformations $T$ which are rational functions of the given variable. If $\epsilon$-form can not be obtained, one might ask whether it is possible to change variable in such a way that the transformations rational in new variable are sufficient to obtain $\epsilon$-form. Let us therefore consider the change of variable 
\begin{equation}\label{eq:varchange}
x=p(y)/q(y)\,,
\end{equation}
where $p(y)$ and $q(y)$ are two coprime polynomials in $y$. Without loss of generality we may assume that $\deg q\leqslant \deg p$.
Note that by passing to $y$ we extend the class of transformations as any transformation $T$ rational in $x$ is also rational in $y$, but the converse is not true in general.

Suppose first that $\epsilon$-form can not be reached by rational transformations in $x$ because not all eigenvalues of the matrix residues have the form $k+\alpha \epsilon$, with integer $k$. We will restrict ourselves to the case relevant for multiloop calculations when there are singular points in which the eigenvalues of the matrix residue can be represented as  $k/2+\alpha \epsilon$, where $k$ is odd at least for one eigenvalue. We will try to find a change of variable of the form \eqref{eq:varchange} such that after the change all residues have the desired form $k+\alpha \epsilon$ with integer $k$. Let us, e.g., assume, that this happens at $x=0$. Then any preimage of $x=0$ under the variable change \eqref{eq:varchange} should be a multiple zero of $p(y)$ with multiplicity equal to even number. In other words, $p(y)=\tilde{p}^2(y)$, where $\tilde{p}(y)$ is a polynomial. In general, if we have $n$ points $x=x_i\,,\ (i=1,\ldots,n)$ in which the matrix residues have half-integer eigenvalues (at $\epsilon=0$), we have the following constraints:
\begin{equation}\label{eq:cond}
\beta_i p(y) - \alpha_i q(y) = \tilde{p}_i^2(y)\,,\quad (i=1,\ldots, n)\,.
\end{equation}
Here $[\alpha_i:\beta_i]$ are the homogeneous coordinates of $x_i=\alpha_i/\beta_i$. The suitable variable changes depend on the number of points $n$. 
\begin{enumerate}
	\item $n=1$: we assume that $x_1=0$, then any suitable variable change has the form $x=p^2/q$	
	\item $n=2$: we assume that $x_1=0\,,\ x_2=\infty$, then any suitable variable change has the form $x=p^2/q^2$	
	\item $n=3$: we assume that $x_1=0\,,\ x_2=\infty\,,\ x_3=1$, then any suitable variable change has the form $x=\left(p^2+q^2 \over 2 p q\right)^2$.
	\item $n\geqslant4$: no suitable variable change (this is a simple consequence of Eq. \eqref{eq:cond} and Lemma 2.3 of Ref.\cite{Reid}).
\end{enumerate}
Here we used the fact that any triplet of points on the Riemann sphere can be mapped onto $\{0,1,\infty\}$ by M\"obius transformations, $p$ and $q$ are polynomials in $y$.
Note that the variable changes for the cases $n=2$ and $n=3$ can be understood as the canonical changes, $x\to x^2$ and $x\to \left(x^2+1 \over 2x\right)^2$ followed by an arbitrary rational change $x\to p/q$ (which can be discarded, see below).

Now we argue that, apart from the above reasons, the rational transformations can not help in finding $\epsilon$-form. I.e., if all eigenvalues of the matrix residues are already of the form $k+\alpha \epsilon\,,\ k\in\mathbb{Z}$, the $\epsilon$-form either can be reached by a rational in $x$ transformation $T$ or can not be reached by any transformation of a wider class of rational transformations in $y$. 

First we note that if the system is normalized Fuchsian at some point $x=x_0$ with the normalization given by the condition $\lambda =\alpha \epsilon$, the new system obtained by change of variable \eqref{eq:varchange} is also normalized Fuchsian at any preimage of $x_0$.

Suppose, that $U$ has the decomposition \eqref{eq:decomposition1} with $D\neq 0$. Then without loss of generality we may assume that
\begin{equation}\label{eq:decomposition2}
U =x^D\,.
\end{equation}
Indeed, the matrices $M_T$  and $\tilde{M}_S$ have the same properties as $M$  and $\tilde{M}$. So, we may replace $M\to M_T$  and $\tilde{M}\to\tilde{M}_S$ everywhere in our considerations. Then $U$ has the form presented in \eqref{eq:decomposition2}. Each nonzero eigenvalue of $D$ corresponds to the degree of a holomorphic line bundle. Then, it is easy to see that upon the rational change of variables $x=p(y)/q(y)$, the degree of the line bundle is multiplied by the degree of the mapping $k=\max(\deg p(y),\deg q(y))$. Therefore, the matrix $U$ corresponds to the holomorphic vector bundle on the Riemann sphere of $y$ labeled by the spectrum of $k D$. Since, by assumption, $k D\neq 0$, normalization can not be performed by means of the transformations rational in $y$.

\section{Examples}

Let us give two examples of the application of the presented algorithm. Both examples concern reduction to $\epsilon$-form for the multiloop calculations and we fix the normalization by the requirement that all eigenvalues of all matrix residues are proportional to $\epsilon$.

\paragraph{Example 1.}
First, consider the matrix 
\begin{equation}
	M=\left(
	\begin{array}{ccc}
	\frac{2 \left(2 x^2+5\right) \epsilon }{(x-1) x (x+1)} & \frac{\epsilon }{(x-1) (x+1)} & \frac{(x+3) \epsilon }{(x-1)
		(x+1)} \\
	-\frac{12 (x+9) \epsilon }{(x-1) (x+1)} & -\frac{x (x+7) \epsilon }{(x-1) (x+1)} & -\frac{x^3 \epsilon +12 x^2 \epsilon
		+x^2+35 x \epsilon -1}{(x-1) (x+1)} \\
	\frac{12 \epsilon }{(x-1) (x+1)} & \frac{x \epsilon }{(x-1) (x+1)} & \frac{x (x+5) \epsilon }{(x-1) (x+1)} \\
	\end{array}
	\right)
\end{equation}
This matrix has singularities at $x=0,-1,1,\infty$ and is normalized Fuchsian at $x=0,-1,1$. At $x=\infty$ it is not Fuchsian. Now we reduce this matrix by applying a few balances between $0$ and $\infty$ trying to obtain the normalized Fuchsian form at $x=\infty$ but not paying attention to the behavior at $x=0$. In particular, we may  apply the transformation 
\begin{equation}
U=\left(
\begin{array}{ccc}
1 & 0 & 0 \\
0 & x & 0 \\
0 & -\frac{x-1}{x} & \frac{1}{x} \\
\end{array}
\right)\,
\end{equation}
and obtain 
\begin{equation}
M_U=\left(
\begin{array}{ccc}
\frac{2 \left(2 x^2+5\right) \epsilon }{(x-1) x (x+1)} & -\frac{(2 x-3) \epsilon }{(x-1) x (x+1)} & \frac{(x+3)
	\epsilon }{(x-1) x (x+1)} \\
-\frac{12 (x+9) \epsilon }{(x-1) x (x+1)} & \frac{4 x^3 \epsilon +23 x^2 \epsilon -x^2-35 x \epsilon +1}{(x-1) x^2
	(x+1)} & -\frac{x^3 \epsilon +12 x^2 \epsilon +x^2+35 x \epsilon -1}{(x-1) x^2 (x+1)} \\
-\frac{12 (8 x-9) \epsilon }{(x-1) x (x+1)} & \frac{24 x^3 \epsilon -58 x^2 \epsilon +x^2+35 x \epsilon -1}{(x-1) x^2
	(x+1)} & -\frac{6 x^3 \epsilon +23 x^2 \epsilon -x^2-35 x \epsilon +1}{(x-1) x^2 (x+1)} \\
\end{array}
\right)
\end{equation}
This matrix is normalized Fuchsian at $x=-1,1,\infty$ but not at $x=0$. Now we apply the decomposition algorithm to the matrix $U$ and obtain:
\begin{equation}
	U=T(x)S^{-1}(x^{-1})=\left(
	\begin{array}{ccc}
	1 & 0 & 0 \\
	0 & 1 & x \\
	0 & 0 & -1 \\
	\end{array}
	\right) \times\left(
	\begin{array}{ccc}
	1 & 0 & 0 \\
	0 & 1 & 1 \\
	0 & 1-x^{-1} & -x^{-1} \\
	\end{array}
	\right)
\end{equation}
Applying the transformation $T$, we find
\begin{equation}
M_T=\left(
\begin{array}{ccc}
\frac{2 \left(2 x^2+5\right) \epsilon }{(x-1) x (x+1)} & \frac{\epsilon }{(x-1) (x+1)} & -\frac{3 \epsilon }{(x-1)
	(x+1)} \\
-\frac{108 \epsilon }{(x-1) (x+1)} & -\frac{7 x \epsilon }{(x-1) (x+1)} & \frac{35 x \epsilon }{(x-1) (x+1)} \\
-\frac{12 \epsilon }{(x-1) (x+1)} & -\frac{x \epsilon }{(x-1) (x+1)} & \frac{5 x \epsilon }{(x-1) (x+1)} \\
\end{array}
\right)\,
\end{equation}

\paragraph{Example 2.}
Let us now consider the matrix
\begin{equation}
M=\left(
\begin{array}{cc}
-\frac{(x+1) \epsilon }{(x-1) x} & -\epsilon  x+x-3 \epsilon -3 \\
-\frac{1}{(x-9) (x-1) x} & -\frac{2 \epsilon }{x-9} \\
\end{array}
\right)
\end{equation}
This matrix has singularities at $x=0,1,9,\infty$ and is normalized Fuchsian at $x=0,1,9$. At $x=\infty$ it is not Fuchsian. We reduce this matrix by the transformation 
\begin{equation}
U=\left(
\begin{array}{cc}
x & x-1 \\
0 & x^{-1} \\
\end{array}
\right)
\end{equation}
and obtain 
\begin{equation}
M_U=\left(
\begin{array}{cc}
-\frac{\epsilon  x^2-8 \epsilon  x-9 x-9 \epsilon +9}{(x-9) (x-1) x} & \frac{6 (x+3) (2 \epsilon +1)}{(x-9) x^2} \\
-\frac{x}{(x-9) (x-1)} & -\frac{2 x \epsilon +9}{(x-9) x} \\
\end{array}
\right)
\end{equation}
This matrix is normalized Fuchsian at $x=1,9,\infty$ but not at $x=0$. Now we apply the decomposition algorithm to the matrix $U$. As it is required, both $U$ and $U^{-1}$ are Laurent polynomials of $x$ and $\det U=1$ is independent of $x$. However, there is only one polynomial vector $v(x)={1\choose 0}$ which satisfies Eq. \eqref{eq:PPeq}. Therefore, we can not find the required decomposition. This means that the global normalized Fuchsian form does not exist (and the more so for the $\epsilon$-form).


\section{Conclusion}

In the present paper we have formulated the criterion of (ir)reducibility of the system of differential equations to global normalized Fuchsian form on the Riemann sphere. This criterion is constructive in the sense that for reducible system it gives the required transformation matrix.

The question of reducibility is known to be related to that of triviality of holomorphic vector bundle on the Riemann sphere \cite{Deligne}. In application to $\epsilon$-form in multiloop calculations this perspective allows one to establish (ir)reducibility to $\epsilon$-form by the transformations from a wider class of transformations rational in arbitrary new variable $y$ connected with the old variable $x$ by \eqref{eq:varchange} thus sparing the necessity to consider passing to new variable for the systems which can be reduced to $\epsilon$-form locally but not globally.

We have considered the question of choosing the suitable new variable $y$, see Eq. \eqref{eq:varchange}, for the systems which can not be locally reduced to $\epsilon$-form in some points. As a rule, this happens because some of the eigenvalues of the matrix residues in these points have the form $k+1/2+\epsilon$. We have shown that in case of two and three such points there is a canonical variable change while for $n>3$ points there is no suitable variable at all. In case of only one point with matrix residue having 'half-integer' eigenvalue there is a wider class of variable changes which one should try: $x = p^2(y)/q(y)$, where $p$ and $q$ are some polynomials, $\deg q\leqslant \deg p$. It would be interesting to further restrict the analysis in this case.

Two possible directions of further investigation are obvious. First, one might stick to the rational transformations, possibly augmented by the rational variable change, and ask what generalizations of the $\epsilon$-form are required for the systems which are not reducible. One of the possible generalizations would be to try to represent the matrix in the right-hand side as
\begin{equation}
M(\epsilon,x)=\epsilon A(x)+B(x)
\end{equation} 
with $B\neq 0$. In particular, the equations for massive sunrise diagrams can usually be reduced to $\epsilon$-form near $d=3$. It means that we can secure $M(\epsilon,x)=(\epsilon -1/2) A(x)$.

Another direction would be to examine how one can minimally extend the class of transformations $T$ which are required to reduce the systems to $\epsilon$-form.

\acknowledgments

The work of R. Lee has been supported by the grant of the ``Basis'' foundation for theoretical physics.

\bibliographystyle{JHEP}

\providecommand{\href}[2]{#2}\begingroup\raggedright\endgroup
\end{document}